\documentclass[onecollarge]{fbsstyle}
\smartqed  
\usepackage{graphicx}
\usepackage{mathptmx}      
%
\usepackage{amssymb}
\usepackage{amsmath}
\usepackage{bm}
\usepackage{mathptmx}
\usepackage{amssymb}
\usepackage{epsfig}
%

\newcommand{\bbC}{{\mathbb C}}
\newcommand{\bbR}{{\mathbb R}}

\newcommand{\bk}{\bm{k}}
\newcommand{\br}{\bm{r}}

\newcommand{\cB}{{\mathcal B}}

\newcommand{\fH}{\mathfrak{H}}
\newcommand{\fh}{\mathfrak{h}}

\newcommand{\Img}{\mathop{\rm Im}}
\newcommand{\lal}{{\langle}}
\newcommand{\ral}{{\rangle}}

\newcommand{\ri}{{\rm i}}

\DeclareSymbolFont{SY}{U}{psy}{m}{n}
\DeclareMathSymbol{\emptyset}{\mathord}{SY}{'306}
\numberwithin{equation}{section}
%
%
\begin{document}

\title{Unphysical energy sheets and resonances in the
Friedrichs-Faddeev model\thanks{This work was supported by the
Russian Foundation for Basic Research (grant 16-01-00706).}
}

\titlerunning{Unphysical energy sheets and resonances in the
Friedrichs-Faddeev model}        

\author{Alexander K. Motovilov}

\authorrunning{A.K.Motovilov} 

\institute{A.K.Motovilov \at
              Bogoliubov Laboratory of Theoretical Physics, JINR, Joliot-Curie 6,
              141980 Dubna, Russia  \\
             \emph{and} \\
              Dubna State University, Universitetskaya 19, 141980 Dubna, Russia
              \\
              \email{motovilv@theor.jinr.ru}}

\date{}

\maketitle

\begin{abstract}
We consider the Friedrichs-Faddeev model in the case where the
kernel of the potential operator is holomorphic in both arguments on
a certain domain of $\mathbb{C}$. For this model we, first, study
the structure of the $T$- and $S$-matrices on unphysical energy
sheet(s). To this end, we derive representations that explicitly
express them in terms of these same operators considered exclusively
on the physical sheet. Furthermore, we allow the Friedrichs-Faddeev
Hamiltonian undergo a complex deformation (or even a complex
scaling/rotation if the model is associated with an infinite
interval). Isolated non-real eigenvalues of the deformed Hamiltonian
are called the deformation resonances. For a class of
perturbation potentials with analytic kernels, we prove that the
deformation resonances do correspond to the scattering matrix
resonances, that is, they represent the poles of the scattering matrix
analytically continued to the respective unphysical energy sheet.
\keywords{Friedrichs-Faddeev model \and Unphysical sheets \and
Resonances \and Complex deformation}
\end{abstract}

\vspace*{1cm}

\begin{center}
\textit{Dedicated to the memory of Ludwig Dmitrievich Faddeev}
\end{center}

\section{Introduction}


In 1938, Kurt Friedrichs \cite{Fr1938} considered a model
Hamiltonian of the form
$$
H_\epsilon=H_0+\epsilon V
$$
with $H_0$, the multiplication by the independent variable $\lambda$,
{$$
(H_0 f)(\lambda) =\lambda f(\lambda), \quad\quad \lambda\in(-1,1)\subset\bbR, \quad\quad f\in L_2(-1,1),
$$}
$\epsilon>0$, and $V$, an integral operator,
{$$
(Vf)(\lambda)=\int_{-1}^1 V(\lambda,\mu)f(\mu) d\mu,
$$}%
where the kernel $V(\lambda,\mu)$ is a continuous function in
$\lambda,\mu\in[a,b]$ of a H\"older class. Furthermore, he assumed
that $V(-1,\mu)=V(1,\mu)=V(\lambda,-1)=V(\lambda,1)=0$\, for any
$\lambda,\mu\in[-1,1]$.

The Hermitian (self-adjoint) operator $H_0$ has absolutely
continuous spectrum that fills the segment $[-1,1]$. Friedrichs
studied what happens to the continuous spectrum of $H_0$ under the
perturbation $\epsilon V$. He has proven that if $\epsilon$ is
sufficiently small then $H_\epsilon$ and $H_0$ are similar, which
means that the spectrum of $H_\epsilon$ is also absolutely
continuous and fills $[-1,1]$.

In a 1948 paper \cite{Fr1948},  Friedrichs has extended this result
to the case where the unperturbed Hamiltonian $H_0$ is the
multiplication by independent variable in the Hilbert space
$\fH=L_2(\Delta,\fh)$ of square-integrable vector-valued functions
$f:\,\Delta\to\fh$ where $\Delta$ is a finite or infinite interval
on the real axis, $\Delta=(a,b)$, with $-\infty\leq a < b \leq
+\infty$, and $\fh$ is an auxiliary Hilbert space (finite- or
infinite-dimensional). In this case, it is assumed that for every
{$\lambda,\mu\in\Delta$} the quantity {$V(\lambda,\mu)$} {is a
bounded linear operator on} {$\fh$}, that
$V(\lambda,\mu)=V(\mu,\lambda)^*$, and that $V$ is a H\"older
continuous operator-valued function of $\lambda,\mu$. Friedrichs
proved that for sufficiently small $\epsilon$ the perturbed operator
$H_\epsilon=H_0+\epsilon V$ is unitarily equivalent to the
unperturbed one, $H_0$, and thus the spectrum of $H_\epsilon$ is
absolutely continuous and fills the set $\overline{\Delta}$.

In 1958, O.A.\,Ladyzhenskaya and L.D.\,Faddeev \cite{LaF1958} have
completely dropped the smallness requirement on $V$ and considered
the model operator
\begin{subequations}
    \label{FF}
\begin{align}
\tag{\ref{FF}\,\textrm{a}}\label{FF:a}
H&=H_0+V,  \\
\tag{\ref{FF}\,\textrm{b}}\label{FF:b}
(H_0 f)(\lambda)=\lambda f(\lambda),&\quad (Vf)(\lambda)=\int_\Delta V(\lambda,\mu)f(\mu) d\mu,\\
f\in & L_2(\Delta,\fh), \quad \Delta=(a,b), \nonumber
\end{align}
\end{subequations}
with NO small $\epsilon$ in front of $V$. Instead, they require compactness of the value
of $V(\lambda,\mu)$ as an operator in $\fh$ for any $\lambda,\mu\in\overline{\Delta}$.
\medskip

Proofs of the results in \cite{LaF1958} (and their extension) are
given in a Faddeev's 1964 work \cite{Fa1964}. In fact, this work
presents a complete version of the scattering theory for the
model \eqref{FF}. The 1964 paper may also be viewed as a
relatively simple introduction to the methods and ideas Faddeev used
in his celebrated analysis \cite{Fa1963} of the three-body problem.

Faddeev's detail study of the Hamiltonian \eqref{FF} is the first
reason why this Hamiltonian is sometimes called the Friedrichs-Faddeev
model. One more reason is related to the fact that the 1948
Friedrichs' paper contains another important ($2\times 2$ block
matrix) operator model that is called ``simply'' Friedrichs' model.
The second model works, in particular, for the Feshbach resonances (see,
e.g., \cite{GPr2011,AM2017} and references therein). For later
developments concerning the Friedrichs-Faddeev model proper and its
generalizations see \cite{PaPe1970,La1986c,La1989,DNY1992,IsRi2012}.

Notice that the typical two-body Schr\"odingrer operator may be
viewed as a particular case of the Friedrichs-Faddeev  model with
{$a=0$} and {$b=+\infty$} (see \cite{Fa1964}). Simply consider the c.m.
two-body Hamiltonian in the momentum $(\bk)$ space and make the variable change
{$|\bk|^2\to \lambda$}; in this case the internal (auxiliary) space
is $\fh=L_2(S^2)$, i.e. the space of square-integrable functions
on the unit sphere $S^2$ in $\bbR^3$.

It turned out that there is a certain gap in the study of analytical
properties and structure of the Friedrichs-Faddeev $T$- and
$S$-matrices on uphysical sheets of the energy plane. We fill this
gap by using the ideas and approach from the author's works
\cite{Mo1993}, \cite{Mo1997}. Namely, we derive representations for
the $T$- and $S$-matrices on uphysical sheets that explicitly
express them in terms of these same operators considered exclusively
on the physical sheet (see Proposition \ref{ProT} and Corollary
\ref{corol}). These representations show that the resonances
correspond, in fact, to the energies $z$ in the physical sheet where
the scattering matrix has eigenvalue zero.

Furthermore, we perform a {\textit{complex deformation}} (a
generalization of the complex scaling) of the Friedrichs-Faddeev
Hamiltonian. Discrete spectrum of the complexly deformed Hamiltonian
contains the {``complex scaling resonances''}. We show that these
resonances are simultaneously the {scattering matrix resonances}. In
the case of the Friedrichs-Faddeev model this is done quite easily and
illustratively.  Recall that, in general, to prove the equivalence
of scaling resonances and scattering matrix resonances is rather a
hard job (see \cite{Ha1979}).

Few words about notation used throughout the article. By $\sigma(T)$
we denote the spectrum of a closed linear operator $T$. Notations
$\sigma_p(T)$  and $\sigma_c(T)$ are used for the point spectrum
(the eigenvalue set) and continuous spectrum of $T$.  By
$I_{\mathfrak{K}}$ we denote the identity operator on a Hilbert (or
Banach) space $\mathfrak{K}$; the index $\mathfrak{K}$ may be
omitted if no confusion arises.

\section{Structure of the \textit{T}- and \textit{S}-matrices on unphysical energy sheets}
\label{Sec2}

First, let us recollect the description of the Friedrichs-Faddeev
model. We assume that {$\fh$} is an auxiliary
(``internal'') Hilbert space and {$\Delta=(a,b)$}, an interval on
{$\bbR$},
$$
{-\infty\leq a<b\leq+\infty}.
$$
Hilbert space of the problem is the space of square-integrable
{$\fh$}-valued functions on {$\Delta$}, $\fH=L_2(\Delta,\fh)$, with
the scalar product
$$
\lal f,g\ral=\int_a^b d\lambda \lal f(\lambda),g(\lambda)\ral_\fh,
$$
where {$\lal\cdot,\cdot\ral_\fh$} denotes the inner product in
$\fh$. Surely, the norm on {$\fH$} is given by
$$
\|f\|=\left(\int_a^b d\lambda \|f(\lambda)\|_\fh^2\right)^{1/2},
$$
where $\|\cdot\|_\fh$ stands for the norm on $\fh$.

Unperturbed Hamiltonian $H_0$ and perturbation potential $V$ are
given by \eqref{FF:b} where for each $\lambda,\mu\in(a,b)$ the value of
$V(\lambda,\mu)$ is a compact operator in $\fh$. We assume that
the function $V(\lambda,\mu)$ admits analytic continuation both in $\lambda$ and
$\mu$ into some domain {$\Omega\subset\bbC$} containing $\Delta$.
More precisely, we assume that
\begin{equation}
\label{Vhol}
V(\lambda,\mu) \quad \text{is holomorphic in both}\quad
\lambda,\mu\in\Omega,\quad (a,b)\subset\Omega.
\end{equation}
In addition, we suppose that $V(\lambda,\mu)=V(\mu,\lambda)^*$ for
real $\lambda,\mu \in\Delta$ (for Hermiticity of $V$). This
automatically implies $V(\lambda,\mu)=V(\mu^*,\lambda^*)^*$ for any
$\lambda,\mu\in\Omega$ such that their conjugates
$\lambda^*,\mu^*\in\Omega$ and, hence, the domain $\Omega$ should be
mirror symmetric with respect to the real axis. Following Friedrichs
\cite{Fr1948} and Faddeev \cite{Fa1964} we also require
\begin{equation}
\label{Vab}
V(a,\mu)=V(b,\mu)=V(\lambda,a)=V(\lambda,b)=0,\quad \text{ in case of finite $a$ or/and $b$}
\end{equation}
or impose suitable
requirements on the rate of decreasing of $V(\lambda,\mu)$ as
$|\lambda|,|\mu|\to\infty$, in case of infinite $a$ or/and $b$.
To simplify consideration, in the latter case we assume that
$\Omega$ and $V$ are such that
\begin{align}
\label{C1}
\|V(\lambda,\mu)\|&\leq K (1+|\lambda|+|\mu|)^{-(1+\eta_1)}, \quad \eta_1>0;\\
\label{C2}
\|V(\lambda+\alpha,\mu+\beta)-V(\lambda,\mu)\| &\leq K  (1+|\lambda|+|\mu|)^{-(1+\eta_1)}
(|\alpha|^{\eta_2}+|\beta|^{\eta_2}), \quad \eta_2>1/2,
\end{align}
for some $K>0$ and for any $\lambda,\mu\in\Omega$ and any $\alpha$,
$\beta$ such that $\lambda+\alpha\in\Omega$, $\mu+\beta\in\Omega$.
Since $V(\lambda,\mu)$ is holomorphic in $\Omega$ both in $\lambda$
and $\mu$, the requirement \eqref{C2} with $\eta_2<1$ is responsible
for the behavior of $V(\lambda,\mu)$ in the neighborhoods of
the (finite) end points $a$ and/or $b$. Otherwise, one can replace
$\eta_2$ with unity.
\medskip

As usually, the total Hamiltonian is $H=H_0+V$. Also we use the
standard notation for the resolvents and for the transition
operator: For $z$ lying outside the corresponding spectrum
$\sigma(H_0)$ or $\sigma(H)$, we introduce
$$
R_0(z)=(H_0-z)^{-1}, \quad R(z)=(H-z)^{-1},\quad\text{and}\quad T(z)=V-VR(z)V.
$$
Recall that, at least, for $z\not\in\sigma(H_0)\cup\sigma(H)$
\begin{equation}
\label{RTR}
R(z)=R_0(z)-R_0(z)T(z)R_0(z).
\end{equation}
Thus, the spectral problem for the Hamiltonian $H$ is reduced to the study of the
transition operator $T(z)$, the kernel of which is less singular
than that of $R(z)$.

From Faddeev's work \cite{Fa1964} we know that $T(\lambda,\mu,z)$ is
well-behaved function of $\lambda,\mu\in\Delta$ and $z$ on the
complex plane $\bbC$ punctured at $\sigma_p(H)$ and cut along
$(a,b)$. More precisely (see \cite[Theorem 3.1]{Fa1964}),
$T(\lambda,\mu,z)$ is of the same class \eqref{C1}, \eqref{C2} as
$V(\lambda,\mu)$ but with $\eta_1$ and $\eta_2$ replaced by positive
$\eta'_1<\eta_1$ and $\eta'_2<\eta_2$ which may be chosen arbitrary
close to $\eta_1$ and $\eta_2$, respectively. Furthermore, the
kernel $T(\lambda,\mu,z)$ has limits
$$
T(\lambda,\mu,E\pm i0), \quad E\in\Delta\setminus\sigma_p(H),
$$
that are (in our case) smooth in $\lambda,\mu\in\Delta\setminus\sigma_p(H)$. The
scattering matrix for the pair $(H_0,H)$ is given by
$$
S_+(E)=I_\fh-2\pi\ri \, T(E,E,E+\ri 0),\quad E\in(a,b)\setminus\sigma_p(H).
$$
Notice that due to the condition \eqref{Vab} and requirement
\eqref{C2} the eigenvalue set $\sigma_p(H)$ of $H$ consists of
finite number of eigenvalues having finite multiplicities (see
\cite{Fa1964}; cf. \cite{DNY1992}).

Now take a look of the Lippmann-Schwinger equations for the kernel
$T(\lambda,\mu,z)$ of the transition operator $T(z)$:
\begin{align}
\label{TLS1}
{T}({\lambda},\mu,z)&=V({\lambda},\mu)-\int_a^b d\nu\quad
\frac{V({\lambda},\nu){T}(\nu,\mu,z)}{\nu-z},
\\
\label{TLS2}
{T}(\lambda,{\mu},z)&=V(\lambda,{\mu})-\int_a^b d\nu\quad
\frac{{T}(\lambda,\nu,z)V(\nu,{\mu})}{\nu-z},
\\
\nonumber
&\qquad z\not\in(a,b),\quad \lambda,\mu\in(a,b)
\end{align}
Clearly, since $V(\lambda,\mu)$ is analytic in $\lambda\in\Omega$,
equality \eqref{TLS1} implies the same analyticity of $T(\lambda,\mu,z)$.
Analogously, equality  \eqref{TLS2} yields the holomorphy of {$T(\lambda,\mu,z)$}
in {$\mu\subset\Omega$}. Thus we arrive at the following
statement.

\begin{proposition}
\label{P1} If  $z\not\in(a,b)\cup\sigma_p(H)$, the kernel
$T(\lambda,\mu,z)$ is holomorphic in both $\lambda\in\Omega$ and
$\mu\in\Omega$. Furthermore, one can replace {$(a,b)$} in
\eqref{TLS1} and \eqref{TLS2} by arbitrary piecewise smooth Jordan
contour {$\gamma\subset\Omega$} obtained by continuous deformation
from {$(a,b)$} provided that the end points are fixed and the point
{$z$}  is avoided during the transformation $(a,b)\to\gamma$.
\end{proposition}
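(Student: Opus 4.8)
The plan is to derive the two assertions separately: the holomorphy directly from the Lippmann--Schwinger equations \eqref{TLS1}, \eqref{TLS2}, and the contour deformation from Cauchy's integral theorem, the two being linked by the observation that holomorphy of $T$ in its first argument is precisely what makes the integrand of \eqref{TLS1} holomorphic in the integration variable.

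For the holomorphy I would fix $z\notin(a,b)\cup\sigma_p(H)$ and a real $\mu\in(a,b)$, and regard the right-hand side of \eqref{TLS1} as a function of $\lambda$. Its free term $V(\lambda,\mu)$ is holomorphic in $\lambda\in\Omega$ by \eqref{Vhol}. In the integral the factors $T(\nu,\mu,z)$ and $(\nu-z)^{-1}$ are independent of $\lambda$, while $V(\lambda,\nu)$ is holomorphic in $\lambda\in\Omega$ for each $\nu$; together with the bounds \eqref{C1}, \eqref{C2} on $V$ and the analogous bounds on $T$ recalled above (with $\eta'_1,\eta'_2$), this lets me interchange the $\nu$-integral with a contour integral in $\lambda$ and apply Morera's theorem, so that the integral, and hence the whole right-hand side, is an operator-valued holomorphic function of $\lambda\in\Omega$. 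This provides the holomorphic continuation of $T(\lambda,\mu,z)$ from $(a,b)$ to all of $\Omega$. Running the same argument on \eqref{TLS2} with $\lambda$ fixed gives holomorphy in $\mu\in\Omega$; combined with continuity in the pair $(\lambda,\mu)$ this yields the joint statement.

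For the contour deformation I would fix $\lambda,\mu\in\Omega$ and the same $z$, and view the integrand $g(\nu)=V(\lambda,\nu)\,T(\nu,\mu,z)\,(\nu-z)^{-1}$ as a function of $\nu$. By \eqref{Vhol} the factor $V(\lambda,\nu)$ is holomorphic in $\nu\in\Omega$, and by the first part $T(\nu,\mu,z)$ is holomorphic in its first slot $\nu\in\Omega$, so $g$ is holomorphic on $\Omega\setminus\{z\}$ with its only singularity a simple pole at $\nu=z$. Since $\gamma$ arises from $(a,b)$ by a continuous deformation inside $\Omega$ that keeps the endpoints $a,b$ fixed and never crosses $z$, the contours $(a,b)$ and $\gamma$ together bound a region lying in $\Omega\setminus\{z\}$, on which $g$ is holomorphic; Cauchy's theorem then equates $\int_{(a,b)}g\,d\nu$ with $\int_\gamma g\,d\nu$. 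Hence \eqref{TLS1} holds verbatim with $(a,b)$ replaced by $\gamma$, and the same reasoning handles \eqref{TLS2}.

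The step I expect to be the main obstacle is the holomorphy of the parameter integral, i.e.\ the justification of interchanging the $\nu$-integration with the $\lambda$-contour integral (equivalently, differentiating under the integral). This is where the decay and H\"older conditions \eqref{C1}, \eqref{C2} are essential, particularly when $a$ or $b$ is infinite: I must verify that the integral converges uniformly for $\lambda$ in compact subsets of $\Omega$ and that, during the deformation $(a,b)\to\gamma$, the tails of the contour stay in the region where these estimates apply, so that no contribution leaks in from infinity. Granting the bounds quoted from \cite[Theorem 3.1]{Fa1964}, these points are technical rather than substantial.
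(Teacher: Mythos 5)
Your proposal is correct and follows essentially the same route as the paper: the paper's own (very terse) argument is precisely that holomorphy of $V(\lambda,\mu)$ in $\lambda$ and $\mu$, read through the Lippmann--Schwinger equations \eqref{TLS1} and \eqref{TLS2}, transfers to $T(\lambda,\mu,z)$, after which the contour replacement $(a,b)\to\gamma$ is a consequence of holomorphy of the integrand in $\nu$ and Cauchy's theorem. You merely fill in the standard technical details (Morera's theorem for the parameter integral, the decay estimates \eqref{C1}--\eqref{C2} for infinite intervals) that the paper leaves implicit.
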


In the following {$\bbC^+=\{z\in\bbC\,|\,\, \Img z>0\}$} \quad
({$\bbC^-=\{z\in\bbC\,|\,\, \Img z<0\}$}) stands for the upper
(lower) halfplane of {$\bbC$}.

\begin{figure}[htb]
\includegraphics[angle=0.,width=5.cm]{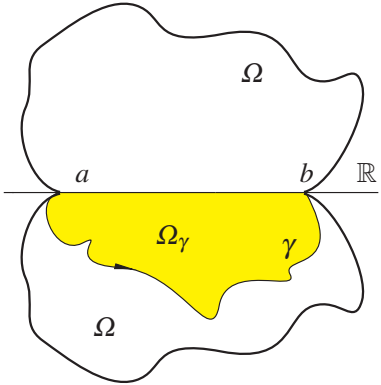}
\caption{Holomorphy domain $\Omega$ for the kernel $V(\lambda,\mu)$.
The set $\Omega_\gamma$ is bounded by (and contains both) the segment
$[a,b]$ and Jordan contour $\gamma$.
} \label{Fig1}
\end{figure}

To simplify the presentation, in the remainder of this note we
usually assume that the real numbers $a$ and $b$ are finite.

Suppose that $\gamma\subset\Omega\cap\bbC^\pm$ is a smooth Jordan
contour obtained from the interval $(a,b)$ by continuous
transformation with fixed end points $a$ and $b$. Then Proposition
\ref{P1} implies that one can equivalently rewrite \eqref{TLS1} as
\begin{align}
\label{TLS3}
T(\lambda,\mu,z)=&V(\lambda,\mu)-\int_\gamma d\nu\quad
\frac{V(\lambda,\nu)T(\nu,\mu,z)}{\nu-z},\\
\nonumber
& \lambda,\mu\in\Omega,\quad z\in\bbC\setminus{\Omega_\gamma},
\end{align}
where the set $\Omega_\gamma\subset\bbC$ is confined by (and
containing) the segment {$[a,b]$} and the curve {$\gamma$} (see
Figure \ref{Fig1}). One may almost literally repeat for \eqref{TLS3}
the analysis of the Lippmann-Schwinger equation \eqref{TLS1}
performed by Faddeev in \cite{Fa1964}. By applying to \eqref{TLS3}
the analytic Fredholm theorem \cite[Theorem VI.14]{RSI} one then
concludes that the solution $T(\lambda,\mu,z)$, in the appropriate
class of H\"older continuous kernels, exists (and is unique) except
for a discrete set of points that consists of the original point
spectrum $\sigma_p(H)$ of $H$ and an additional discrete set
$\sigma_{\rm res}(\gamma)$ located inside $\Omega_\gamma$. Moreover,
the solution $T(\lambda,\mu,z)$ to \eqref{TLS3} is analytic in $z$
for
\begin{equation}
\label{zgam}
z\not\in\sigma_p(H)\cup\overline{\gamma}\cup\sigma_{\rm
res}(\gamma),
\end{equation}
where the overlining in $\overline{\gamma}$ means the closure, that
is, $\overline{\gamma}=\gamma\cup\{a\}\cup\{b\}$. Again, because of
the holomorphy of $V(\lambda,\mu)$ in $\lambda,\mu\in\Omega$ the
solution $T(\lambda,\mu,z)$ remains analytic in
$\lambda,\mu\in\Omega$ for any $z\in\bbC$ satisfying \eqref{zgam}.
This is proven by the same reasoning as in the proof of the first
statement of Proposition \ref{P1}.    The points of $\sigma_{\rm
res}(\gamma)$ give to the solution $T(z)$ poles, residues at which
are finite rank operators. Thus, the equation \eqref{TLS3} gives us
an opportunity to pull the argument $z$ of $T(z)$ from the upper
half-plane $\bbC^+$ to the lower one, at least into the interior of
the set $\Omega_\gamma$. Surely, during this procedure one should
avoid the points of the discrete set $\sigma_{\rm res}(\gamma)$!

However, if, after such a pulling of $z$, one tries to re-establish
in \eqref{TLS3} the original integration over the interval $(a,b)$,
it is necessary to compute the residue at the pole $z$. That is, the
Lippmann-Schwinger equation \eqref{TLS3} changes its form and,
hence, for $z\in\Omega\cap\bbC^-$ the solution $T(\lambda,\mu,z)$ is
taken, in fact, on an unphysical sheet of the Riemann energy surface
of $T$. We denote this unphysical sheet by $\Pi_-$; it is attached
to the physical sheet via the upper rim of the cut along $(a,b)$.
Thus, we are forced to use a different notation, say
$T'(\lambda,\mu,z)$ for the continuation of the kernel of $T$ to
$\Pi_-$. However for $z$ outside $\Omega_\gamma$ this kernel
coincides with the original one, that is,
$T'(\lambda,\mu,z)=T(\lambda,\mu,z)$, provided
$z\in\bbC\setminus\bigl(\Omega_\gamma\cup\sigma_p(H)\bigr)$.

In fact we can solve the continued Lippmann-Schwinger equation
\eqref{TLS3} explicitly! To this end assume that $z\in\Omega_\gamma$
but $z\not\in\overline{\gamma}\cup\sigma_{\rm res}(\gamma)$, and
perform a reverse two-step transformation of the contour $\gamma$
(see Figure~\ref{Fig1}) back to the interval $\Delta=(a,b)$ in the
way shown in Figure~\ref{Fig2}.

\begin{figure}[htb]
{\includegraphics[angle=0,width=8cm]{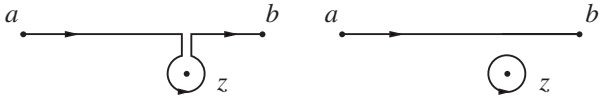}}
\caption{Two steps in deformation of the contour $\gamma$.}
\label{Fig2}
\end{figure}

After such a transformation and computing the residue at {$\nu=z$},
equation \eqref{TLS3} (written already for the unphysical-sheet values
$T'(\lambda,\mu,z)$) turns into the following equation:
\begin{align}
\label{TLS5}
& T'(\lambda,\mu,z)=V(\lambda,\mu)-2\pi\ri\,\, V(\lambda,z)T'(z,\mu,z)-\int_a^b d\nu\quad
\frac{V(\lambda,\nu)T'(\nu,\mu,z)}{\nu-z},\\
\nonumber
& \qquad\qquad\qquad\qquad \lambda,\mu\in\Omega, \quad \quad z\in\Omega\cap\bbC^-.
\end{align}
The points forming $\sigma_p(H)$ and $\sigma_{\rm res}(\gamma)$ for
various $\gamma$ will manifest themselves as singular points of the
equation \eqref{TLS5}. (We will discuss the independence of the
common part $\sigma_{\rm res}(\gamma_1)\cap\sigma_{\rm
res}(\gamma_2)$ of the sets $\sigma_{\rm res}(\gamma_1)$ and $\sigma_{\rm
res}(\gamma_2)$ for different $\gamma_1\neq\gamma_2$ later, in Section \ref{Sec3}.)

Following the standard terminology of scattering theory in momentum
space we call the kernel $T'(z,\mu,z)$ ``half-on-shell'' since its
first argument equals the spectral parameter (energy) $z$. The
kernel $T'(z,z,z)$ is called ``on-shell'' since both the first and
second arguments are equal to the energy $z$. Finally, the kernel
$T'(\lambda,\mu,z)$ with arbitrarily chosen admissible values of the
arguments $\lambda$ and $\mu$ is called ``off-shell''. The terms
``off-shell'', ``half-on-shell'', and ``on-shell'' may be applied to
any function of the three (complex) arguments $\lambda$, $\mu$, and
$z$.

Let us transfer in \eqref{TLS5} all the entries with the off-shell $T'$ to the left-hand
side and then obtain
\begin{align}
\label{TLS6}
&T'(\lambda,\mu,z)+\int_a^b d\nu\quad
\frac{V(\lambda,\nu)T'(\nu,\mu,z)}{\nu-z}=
V(\lambda,\mu)-2\pi\ri\, V(\lambda,z)T'(z,\mu,z),\\
\nonumber
& \qquad\qquad\qquad\qquad \lambda,\mu\in\Omega, \quad z\in\Omega\cap\bbC^-.
\end{align}
Equation \eqref{TLS6} allows us to express the off-shell $T'$ exclusively through
the half-on-shell $T'$ by taking into account that, on the physical sheet,
$$
{(I+VR_0(z))}T(z)=V \quad \Longrightarrow \quad {(I+VR_0(z))^{-1}}V=T(z),\qquad z\not\in\sigma_p(H).
$$
Thus, \eqref{TLS6} implies
\begin{align}
\label{TLS7}
& T'(\lambda,\mu,{z})=
T(\lambda,\mu,z)-2\pi\ri\, T(\lambda,{z},{z})T'({z},\mu,{z}),
\end{align}
where the notation $T(\cdot,\mu,z)$ with no prime means that this
entry is taken for $z$ on the physical sheet of the energy plane.

By setting $\lambda=z$ in \eqref{TLS7} we get ${T'({z},\mu,{z})=
T({z},\mu,z)-2\pi\ri\, T({z},{z},{z})T'({z},\mu,{z})}$, which
yields
\begin{equation}
\label{STT}
{S_-(z)T'({z},\mu,{z})=T({z},\mu,z)},
\end{equation}
where
\begin{equation}
\label{Sz}
S_-({z}):=
I_\fh+2\pi\ri\, T({z},{z},{z}),\quad z\in\Omega\cap\bbC^-,
\end{equation}
is nothing but the scattering matrix for $z$ in the lower complex half-plane. We
underline that the argument $z$ of $S_-(z)$ in \eqref{Sz},
$z\in\Omega\cap\bbC^-$, lies on the physical sheet. From \eqref{STT}
it follows that
\begin{equation}
\label{TST}
T'({z},\mu,{z})=S_-(z)^{-1}T(z,\mu,z),
\end{equation}
provided $z$ is not a point where $S_-(z)$ has an eigenvalue zero.
Finally, the relations \eqref{TLS7} and \eqref{TST} yield
\begin{align}
\label{TLS8} &
T'(\lambda,\mu,{z})= T(\lambda,\mu,z)-2\pi\ri\,
T(\lambda,{z},{z}){S_-(z)^{-1}}T({z},\mu,{z}).
\end{align}
All the entries on the r.h.s. part of \eqref{TLS8} are taken on the physical sheet.

In a similar way we perform the continuation of $T(\lambda,\mu,z)$
from the lower half-plane $\bbC^-$ to the part $\Omega\cap\bbC^+$ of
the unphysical energy sheet {$\Pi_+$} attached to the physical sheet
along the lower rim of the cut $(a,b)$. As a result, we arrive with
the following combined statement (for both the sheets $\Pi_\ell$
where the symbol $\ell=\pm1$ is identified with the respective sign $\pm$ in
the previous notation $\Pi_\pm$).

\begin{proposition}
\label{ProT} The transition operator $T(z)$ admits meromorphic
continuation (as a $\cB(\fH)$-valued function of the variable $z$) through the cut along $(a,b)$ from both the upper,
$\bbC^+$, and lower, $\bbC^-$, half-planes to the respective parts
$$
\text{$\Omega_{-1}:=\bbC^-\cap\Omega$ and $\Omega_{+1}:=\bbC^+\cap\Omega$}
$$
of the unphysical sheets
$\Pi_{-1}$ and $\Pi_{+1}$ adjoining the physical sheet along the
upper and lower rims of the mentioned cut. The kernel of the
continued operator $T(z)\bigr|_{\Pi_\ell\cap\Omega_\ell}$, $\ell=\pm1$,
admits the representation
\begin{align}
\label{TPl}
T(\lambda,\mu,{z})\bigr|_\text{{$z\in\Pi_\ell\cap\Omega_\ell$}}&=
\bigl(T(\lambda,\mu,z)+2\pi{\rm i}\,\ell\,\,
T(\lambda,{z},{z}){S_\ell(z)^{-1}}
T({z},\mu,{z})\bigr)\bigr|_\text{$z\in{\Omega_\ell}$},\\
&\qquad z\in\Omega_\ell\setminus\sigma^\ell_{\rm res},
\end{align}
where all the terms on the r.h.s. part, including the scattering matrix
\begin{equation}
\label{Spm}
S_\ell({z})=
I_\fh- 2\pi\mathrm{i}\ell\, T({z},{z},{z}),
\end{equation}
are taken for $z$ on the physical sheet, and $\sigma^\ell_{\rm res}$ denotes the set of all those points
$\zeta\in\Omega\cap\bbC^\ell$ for which $S_\ell({\zeta})$ has eigenvalue zero.
\end{proposition}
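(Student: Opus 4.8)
The plan is to treat the two unphysical sheets separately and then record the outcome in the single formula \eqref{TPl} through the sign parameter $\ell=\pm1$. The case $\ell=-1$ is already laid out in the chain \eqref{TLS1}--\eqref{TLS8} preceding the statement, so I would first assemble that derivation into a proof. Concretely: starting from the Lippmann--Schwinger equation \eqref{TLS1}, I use Proposition \ref{P1} to deform the integration interval $(a,b)$ into a Jordan contour $\gamma\subset\Omega\cap\bbC^-$, obtaining \eqref{TLS3}; the analytic Fredholm theorem \cite[Theorem VI.14]{RSI} applied to \eqref{TLS3} produces a solution $T(\lambda,\mu,z)$ meromorphic in $z$ away from $\sigma_p(H)\cup\overline{\gamma}\cup\sigma_{\rm res}(\gamma)$, which is precisely what permits $z$ to be pulled across the cut into $\Omega_\gamma\cap\bbC^-$. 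Deforming $\gamma$ back to $(a,b)$ (Figure~\ref{Fig2}) and collecting the residue of the integrand at $\nu=z$ turns \eqref{TLS3} into \eqref{TLS5}; after transferring the off-shell terms to the left as in \eqref{TLS6}, the physical-sheet identity $(I+VR_0(z))^{-1}V=T(z)$ inverts the left-hand operator and yields the closed form \eqref{TLS7}. Setting $\lambda=z$ reduces \eqref{TLS7} to the half-on-shell relation \eqref{STT}, which I solve for $T'(z,\mu,z)=S_-(z)^{-1}T(z,\mu,z)$ and feed back into \eqref{TLS7} to reach \eqref{TLS8}. This is exactly \eqref{TPl} with $\ell=-1$, since then $S_{-1}(z)=I_\fh+2\pi\ri\,T(z,z,z)$ coincides with $S_-(z)$ of \eqref{Sz}.

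Second, I would obtain the case $\ell=+1$ by the mirror-symmetric argument. Here $(a,b)$ is deformed instead into a contour $\gamma\subset\Omega\cap\bbC^+$, so that pulling $z$ down toward the cut and returning $\gamma$ to $(a,b)$ encircles the pole $\nu=z$ with the opposite orientation; the residue therefore enters with the sign $+2\pi\ri$ in place of $-2\pi\ri$. Repeating verbatim the algebra of \eqref{TLS5}--\eqref{TLS8} produces the half-on-shell relation $S_+(z)\,T'(z,\mu,z)=T(z,\mu,z)$ with $S_+(z)=I_\fh-2\pi\ri\,T(z,z,z)$ and the off-shell representation in which every $-2\pi\ri$ is replaced by $+2\pi\ri$. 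The two cases combine into \eqref{TPl} by writing the sign as $2\pi\ri\,\ell$ and the scattering matrix as $S_\ell(z)=I_\fh-2\pi\ri\,\ell\,T(z,z,z)$ in \eqref{Spm}; one checks directly that $\ell=-1$ and $\ell=+1$ reproduce $S_-(z)$ and $S_+(z)$, respectively.

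Third, I would establish the meromorphy and identify the pole set. By Proposition \ref{P1}, every physical-sheet term on the right-hand side of \eqref{TPl}---namely $T(\lambda,\mu,z)$ together with the half-on-shell and on-shell values $T(\lambda,z,z)$, $T(z,\mu,z)$, $T(z,z,z)$---is holomorphic for $z\in\Omega_\ell$, so the sole source of singularities is the factor $S_\ell(z)^{-1}$. Since $V(\lambda,\mu)$ takes compact values on $\fh$, the on-shell kernel $T(z,z,z)$ inherits compactness on $\fh$ and depends holomorphically on $z$; thus $S_\ell(z)=I_\fh-2\pi\ri\,\ell\,T(z,z,z)$ is of the form identity plus a holomorphic compact operator. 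The analytic Fredholm theorem then guarantees that $S_\ell(z)^{-1}$ is holomorphic on $\Omega_\ell$ except on a discrete set where it has poles with finite-rank residues, and that this set is precisely the locus where $S_\ell(z)$ has a nontrivial kernel, i.e.\ eigenvalue zero---the set $\sigma^\ell_{\rm res}$ named in the statement. Sandwiching gives finite-rank residues for $T(\lambda,z,z)\,S_\ell(z)^{-1}\,T(z,\mu,z)$, so $T(z)\bigr|_{\Pi_\ell\cap\Omega_\ell}$ is a genuinely meromorphic $\cB(\fH)$-valued continuation across the cut with pole set $\sigma^\ell_{\rm res}$.

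The main obstacle I anticipate is not the algebra but the justification that the closed form \eqref{TPl}, derived on $\Omega\cap\bbC^\mp$, genuinely is the analytic continuation of $T$ across the cut rather than merely some solution of the residue-corrected equation \eqref{TLS5}. To close this gap I would invoke uniqueness of analytic continuation: for $z\in\bbC\setminus(\Omega_\gamma\cup\sigma_p(H))$ the deformed and undeformed equations share the same solution, so $T'=T$ there, and both the contour-deformed solution of \eqref{TLS3} and the explicit right-hand side of \eqref{TPl} are meromorphic in $z$ on $\Omega_\ell$; agreeing on the nonempty open overlap outside $\Omega_\gamma$, they must coincide throughout. A subsidiary point demanding care is the inversion $(I+VR_0(z))^{-1}$ used to pass from \eqref{TLS6} to \eqref{TLS7}: for $z$ off the real axis $R_0(z)$ is bounded and $VR_0(z)$ is compact, so $I+VR_0(z)$ is invertible away from $\sigma_p(H)$, which is exactly the excluded set.
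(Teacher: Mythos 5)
Your proposal is correct and follows essentially the same route as the paper, whose own proof of Proposition \ref{ProT} is exactly the chain \eqref{TLS1}--\eqref{TLS8} followed by the mirrored contour deformation into $\Omega\cap\bbC^+$ for $\ell=+1$; your residue signs and the resulting formulas \eqref{Spm}, \eqref{TPl} all match (note only that for $\ell=+1$ the point $z$ is pulled \emph{up} from $\bbC^-$ across the cut rather than ``down toward the cut'' as you wrote --- a purely verbal slip, since your orientation count and the sign $+2\pi\ri$ are right). Your third and fourth paragraphs (the analytic Fredholm theorem applied to $S_\ell(z)=I_\fh-2\pi\ri\,\ell\,T(z,z,z)$ to identify the pole set $\sigma^\ell_{\rm res}$, and uniqueness of analytic continuation to certify that \eqref{TPl} is the genuine continuation rather than merely a solution of the residue-corrected equation) supply justifications that the paper leaves implicit, but they do not constitute a different approach.
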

\begin{remark}
Whether {$\Pi_-$} and {$\Pi_+$} represent the same (``second'')
unphysical sheet, depends on the analytical properties of
{$V(\lambda,\mu)$} outside {$\Omega$} (if available, cf.
\cite{Mo1993}).
\end{remark}

Continuation formula \eqref{TPl} for $T(z)$ implies the following important consequence.
\begin{corollary}
\label{corol}
Analytic continuation of the scattering matrix
$S_{-\ell}(z)$, $\ell=\pm1$, to the unphysical sheet $\Pi_{\ell}$ is described by the equality
\begin{equation}
\label{SlS}
S_{-\ell}({z})\bigr|_{z\in\Pi_{\ell}\,\cap\Omega_\ell}=
S_{\ell}(z)^{-1}\bigr|_{z\in\Omega_\ell},\qquad z\not\in\sigma^{\ell}_{\rm res},
\end{equation}
where the r.h.s. part is taken for $z$ on the physical sheet.
\end{corollary}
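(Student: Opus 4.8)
The plan is to read off the continuation of the scattering matrix directly from the on-shell specialization of the continuation formula \eqref{TPl}, exploiting the fact that both $S_\ell(z)$ and the on-shell transition matrix act on the single auxiliary space $\fh$. By definition \eqref{Spm}, we have $S_{-\ell}(z)=I_\fh+2\pi\ri\,\ell\,T(z,z,z)$ on the physical sheet (defined in the half-plane $\bbC^{-\ell}\cap\Omega$), so that $S_{-\ell}$ is built solely out of the on-shell kernel $T(z,z,z)$. Its continuation across the appropriate rim of the cut to $\Pi_\ell$ is therefore obtained by replacing the physical-sheet on-shell value with its continued counterpart. Accordingly, I would set $\lambda=\mu=z$ in \eqref{TPl} to obtain
\[
T(z,z,z)\bigr|_{z\in\Pi_\ell\cap\Omega_\ell}=T(z,z,z)+2\pi\ri\,\ell\,T(z,z,z)\,S_\ell(z)^{-1}\,T(z,z,z),
\]
valid for $z\in\Omega_\ell\setminus\sigma^\ell_{\rm res}$ with every term on the right taken on the physical sheet, and substitute this into the expression for $S_{-\ell}$.

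The structural observation that makes the resulting algebra collapse is that $S_\ell(z)=I_\fh-2\pi\ri\,\ell\,T(z,z,z)$ is an affine function of $T(z,z,z)$, so $S_\ell(z)$, $T(z,z,z)$ and $S_\ell(z)^{-1}$ all commute. Writing $c=2\pi\ri\,\ell$, so that $c^2=-4\pi^2$ and $S_\ell(z)=I_\fh-c\,T(z,z,z)$, the substitution yields
\[
S_{-\ell}(z)\bigr|_{\Pi_\ell}=I_\fh+c\,T(z,z,z)+c^2\,T(z,z,z)^2\,S_\ell(z)^{-1}.
\]
Because this is a function of the single operator $T(z,z,z)$ together with $S_\ell(z)^{-1}$, the claimed identity reduces to the scalar statement $1+ct+c^2t^2s^{-1}=s^{-1}$ with $s=1-ct$, which I would verify at once by multiplying through by $s$: $(1+ct+c^2t^2s^{-1})(1-ct)=(1-c^2t^2)+c^2t^2=1$. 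Reinstating operators and multiplying $S_{-\ell}(z)\bigr|_{\Pi_\ell}$ on the right by $S_\ell(z)$ gives $I_\fh$, whence $S_{-\ell}(z)\bigr|_{\Pi_\ell}=S_\ell(z)^{-1}$, which is precisely \eqref{SlS}.

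I expect the only genuinely delicate point to be bookkeeping rather than analysis. One must check that continuing $S_{-\ell}$, originally defined on the physical sheet in $\bbC^{-\ell}\cap\Omega$, across the correct rim of the cut lands in $\Omega_\ell=\bbC^\ell\cap\Omega$ on $\Pi_\ell$, so that the sheet labels and the sign $\ell$ appearing in \eqref{TPl} are matched consistently; this is where a sign error would be easy to make. Once the continuation of $S_{-\ell}$ is correctly identified with the on-shell continuation of $T$ furnished by Proposition~\ref{ProT}, everything else is the short commutative computation above, valid away from the exceptional set $\sigma^\ell_{\rm res}$, which is exactly the set on which $S_\ell(z)$ fails to be invertible and the representation \eqref{TPl} breaks down.
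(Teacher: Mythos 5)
Your proof is correct and is essentially the paper's own (implicit) argument: the paper presents Corollary \ref{corol} as an immediate consequence of the continuation formula \eqref{TPl}, and your on-shell specialization $\lambda=\mu=z$ followed by the commutative algebra with $c=2\pi\ri\,\ell$ (multiplying by $S_\ell(z)=I_\fh-cT(z,z,z)$ and using that $T(z,z,z)$ commutes with $S_\ell(z)^{-1}$) is exactly the computation the paper leaves to the reader. Your sheet and sign bookkeeping --- $S_{-\ell}$ defined on the physical sheet in $\bbC^{-\ell}\cap\Omega$, continued across the cut to $\Pi_\ell\cap\Omega_\ell$, with the exceptional set $\sigma^\ell_{\rm res}$ being precisely where $S_\ell(z)$ fails to be invertible --- matches the paper's conventions, so there is no gap.
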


\section{Complex scaling and Friedrichs-Faddeev model}
\label{Sec3}

Let $\Delta_{\br}$ denote the Laplacian in the variable
${\br}\in\bbR^3$. In the coordinate space, the standard complex
scaling \cite{Lo1964}, \cite{BaC1971} means the replacement of the
original c.m. two-body Hamiltonian
\begin{equation}
\label{HS}
H=-\Delta_{\br} + \widehat{V}(\br)
\end{equation}
by the non-Hermitian operator
\begin{equation}
\label{Htet}
H(\theta)=-{\rm e}^{-2\ri \theta}\Delta_{\br}+\widehat{V}({\rm e}^{\ri \theta}\br),
\end{equation}
for a non-negative $\theta\leq\pi/2$, provided that the local
potential $\widehat{V}(\br)$ admits  analytic continuation to a
domain of complex $\bbC^3$-arguments $\br$. Location of the spectrum
of $H(\theta)$ is shown schematically in Fig. \ref{Fig3}.

\begin{figure}[htb]
{\includegraphics[angle=0,width=9.cm]{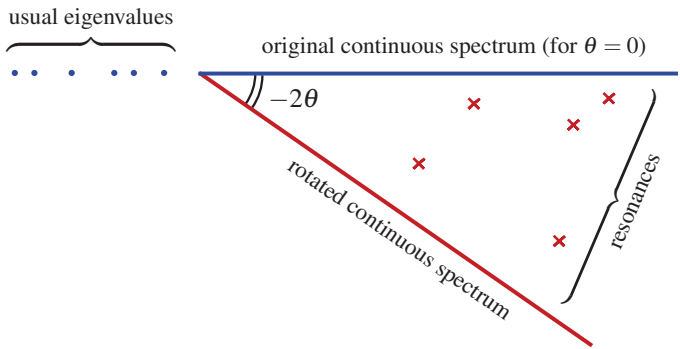}}
\caption{Spectrum of the complexly ``rotated'' Hamiltonian
$H(\theta)$.} \label{Fig3}
\end{figure}

Having performed the Fourier transform of  \eqref{Htet} and then
making the change $|\bk|^2\to\lambda$ one arrives at the complex
version of the Friedrichs-Faddeev model
\begin{align}
\label{FF1}
(H(\theta)f)(\lambda)& ={\rm e}^{-2\ri \theta}\lambda f(\lambda)+
{\rm e}^{-2\ri \theta}\int_0^\infty {V}({\rm e}^{-2\ri \theta}\lambda,
{\rm e}^{-2\ri \theta}\mu)f(\mu)d\mu,\\
\nonumber
& f\in L_2\bigl(\bbR^+,L_2(S^2)\bigr).
\end{align}
The operator-valued function ${V}(\lambda,\mu)$ is explicitly
expressed  through the Fourier transform of $\widehat{V}$. For every
admissible $\lambda,\mu\in\bbC$ the value of ${V}(\lambda,\mu)$ is
an operator (typically, compact) in $\fh=L_2(S^2)$.

The Hamiltonian \eqref{FF1} may be interpreted as an analogue of the
Fri\-ed\-richs-Fad\-de\-ev model \eqref{FF} for a contour in the complex plane,
\begin{equation}
\label{Hgf}
(H_\gamma f)(\lambda)=\lambda f(\lambda)+
\int_\gamma {V}(\lambda,\mu)f(\mu)d\mu,\qquad \lambda\in\gamma,
\end{equation}
where
$$
\gamma={\rm e}^{-2\ri \theta}\bbR^+:=\{z\in\bbC\,|\,\,z={\rm e}^{-2\ri \theta}x,\,0\leq x<\infty\}
$$
and $f\in L_2\bigl(\gamma,L_2(S^2)\bigr)$.

Surely, in the two-body problem case, one has to assume that
${V}(\lambda,\mu)$ is analytic in both $\lambda$ and $\mu$ on some
domain $\Omega\subset\bbC$ containing the positive semiaxis
$\bbR^+$. In addition, $\|{V}(\lambda,\mu)\|$ should decrease
sufficiently rapidly as $|\lambda|\to\infty$ and/or $|\mu|\to\infty$
(in order to ensure that the integral on the right-hand side of \eqref{Hgf}
defines a reasonable operator in $L_2\bigl(\gamma,L_2(S^2)\bigr)$).

\section{Equivalence of the complex rotation resonances and
scattering resonances \newline in the Friedrichs-Faddeev  model}
\label{Sec4}

In the following we consider a family of the
Friedrichs-Faddeev  Hamiltonians
\vspace*{-0.1cm}
\begin{align*}
{H_\gamma}&{=H_{0,\gamma}+V_\gamma}
\end{align*}
associated with smooth Jordan curves $\gamma \subset\Omega$ originating in
$(a,b)$. As before, notation $\Omega$ is used for the holomorphy
domain of $V(\lambda,\mu)$ in the variables $\lambda$ and $\mu$;
$\Omega$ may or may not include $a$ and/or $b$;
\begin{align*}
{\bigl(H_{0,\gamma}f\bigr)(\lambda)}&{=\lambda\,f(\lambda)}\quad
\text{ and }\quad
{\bigl(V_\gamma f\bigr)(\lambda)=\int_\gamma
{V}(\lambda,\mu)f(\mu)d\mu,\quad \lambda\in\gamma}.
\end{align*}
Here, $f$ is taken from the the Hilbert space $\fH_\gamma=L_2(\gamma ,\fh)$
by which one understands the space of $\fh$-valued functions of the
variable $\lambda\in\gamma$ with the scalar product
$$
\lal f,g\ral_\gamma=\int_\gamma  |d\lambda| \lal f(\lambda),g(\lambda)\ral_\fh.
$$

Again assume that both $a$ and $b$ are finite and let
$V(\lambda,\mu)$ be as in Section \ref{Sec2}. Introduce the
transition operator for the pair $(H_{0,\gamma},H_\gamma)$:
\begin{equation}
\label{Tgz}
T_\gamma(z)=V_\gamma-V_\gamma (H_\gamma-z)^{-1}V_\gamma, \quad z\not\in\sigma(H_\gamma).
\end{equation}
For $R_\gamma(z)=(H_\gamma-z)^{-1}$ we have
\begin{equation}
\label{RgT0}
{R_\gamma(z)=R_{0,\gamma}(z)-R_{0,\gamma}(z){T_\gamma(z)}R_{0,\gamma}(z)},
\end{equation}
with $R_{0,\gamma}(z)=(H_{0,\gamma}-z)^{-1}$, $z\not\in\sigma(H_{0,\gamma})$.

Notice that $H_{0,\gamma}$  is a normal operator on $\fH_\gamma$ and
it has only absolutely continuous spectrum that coincides with the
curve $\overline{\gamma}$. Thus, from \eqref{RgT0} it follows that  the
discrete eigenvalues of $H_\gamma$ are nothing but the poles of the
operator-valued function $T_\gamma(z)$.

Assume that the above Jordan curve $\gamma$ lies completely in
$\Omega_-=\Omega\cap\bbC^-$ (or completely in
$\Omega_+=\Omega\cap\bbC^+$).  Let  again $\Omega_\gamma$ denote the
set in the complex plane $\bbC$ confined by (and containing) the
interval $[a,b]$ and the contour $\gamma$.

\begin{proposition}
\label{Pfin} The part of the spectrum of $H_\gamma$ lying outside
$\Omega_\gamma$ is purely real and coincides with
$\sigma_p(H)\setminus\overline\Delta$. Furthermore,
$\sigma_p(H_\gamma)\cap\Delta=\sigma_p(H)\cap\Delta$, that is, the
point spectrum eigenvalues of $H_\gamma$ lying on $\Delta$ do
not depend on the (smooth) Jordan contour $\gamma$. The spectrum of
{$H_\gamma$} inside {$\Omega_\gamma$} represents the
scattering-matrix resonances.
\end{proposition}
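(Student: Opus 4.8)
The plan is to identify the kernel $T_\gamma(\lambda,\mu,z)$ of the contour transition operator \eqref{Tgz} with the physical and unphysical-sheet kernels from Section \ref{Sec2}, and then to read off the pole locations. Concretely, the kernel $T_\gamma(\lambda,\mu,z)$ solves the Lippmann--Schwinger equation with integration along $\gamma$, which is precisely equation \eqref{TLS3}. By Proposition \ref{P1} and the contour-deformation discussion preceding \eqref{TLS3}, for $z$ lying outside $\Omega_\gamma$ the physical-sheet kernel $T(\lambda,\mu,z)$ solves \eqref{TLS3} as well; uniqueness of the solution (analytic Fredholm) then gives $T_\gamma(\lambda,\mu,z)=T(\lambda,\mu,z)$ there. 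For $z$ inside $\Omega_\gamma$ the pole $\nu=z$ does not lie on $\gamma$, so \eqref{TLS3} is still the defining equation for $T_\gamma$, whereas the same equation is exactly what furnishes the continuation of $T$ to the unphysical sheet; hence $T_\gamma(\lambda,\mu,z)=T'(\lambda,\mu,z)$ there, with $T'$ given by \eqref{TLS8}. This two-fold identification is the main engine of the proof.

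With it in hand, recall from \eqref{RgT0} that the discrete spectrum of $H_\gamma$ consists exactly of the poles of $T_\gamma(z)$, while $\sigma_c(H_\gamma)=\overline\gamma$. Outside $\Omega_\gamma$ we have $T_\gamma=T$, whose only poles are the eigenvalues $\sigma_p(H)$; since $H$ is self-adjoint these are real, and since $\overline{\Omega_\gamma}\cap\bbR=[a,b]=\overline\Delta$, the poles of $T_\gamma$ outside $\Omega_\gamma$ are precisely $\sigma_p(H)\setminus\overline\Delta$, a real set. This proves the first assertion. Inside $\Omega_\gamma$ we have $T_\gamma=T'$; by \eqref{TLS8} the physical factors $T(\lambda,z,z)$ and $T(z,\mu,z)$ are analytic for $z\in\bbC^-$, so the only poles of $T'$ there come from $S_-(z)^{-1}$, that is, they are the points of $\sigma^{-1}_{\rm res}$ at which $S_-(z)$ has eigenvalue zero. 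By Corollary \ref{corol} these are exactly the poles of the scattering matrix continued to $\Pi_{-1}$, which proves the third assertion. The case $\gamma\subset\Omega_+$ is symmetric, using $\Pi_{+1}$ and $S_+$.

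The delicate point is the second assertion, concerning eigenvalues on the boundary segment $\Delta$ itself. Since $\gamma$ lies in the open lower half-plane, $\overline\gamma\cap(a,b)=\emptyset$, so a point $E\in\Delta$ is isolated from $\sigma_c(H_\gamma)=\overline\gamma$; consequently the analytic Fredholm theorem makes $T_\gamma(z)$ \emph{meromorphic} in a full two-sided neighborhood of $E$, and $E\in\sigma_p(H_\gamma)$ iff $E$ is a pole of this function. Now $E$ lies on the upper boundary of $\Omega_\gamma$, so points just above it ($z\in\bbC^+$) are outside $\Omega_\gamma$, where $T_\gamma=T$. A meromorphic function is unbounded near each of its poles along every approach, so $E$ is a pole of $T_\gamma$ iff $T(z)$ is unbounded as $z\to E$ from $\bbC^+$. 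For $E\notin\sigma_p(H)$ the boundary value $T(E+\ri 0)$ is finite, hence $E$ is not a pole; for $E\in\sigma_p(H)$ the eigenprojection term $-VP_EV/(E-z)$ forces $T$ to blow up (here $VP_EV\neq 0$, since a nonzero eigenfunction $\psi$ cannot satisfy $V\psi=0$, as $(H_0-E)\psi=0$ would then force $\psi=0$), so $E$ is a pole. Thus $\sigma_p(H_\gamma)\cap\Delta=\sigma_p(H)\cap\Delta$. I expect the main obstacle to be exactly this boundary analysis: justifying the two-sided meromorphy across $\Delta$ and matching it to the one-sided physical boundary values. The unitarity relation $S_-(E-\ri 0)=S_+(E)^{*}=S_+(E)^{-1}$ for real $E\notin\sigma_p(H)$ gives an independent confirmation that the unphysical-sheet kernel $T'$ acquires no spurious pole on $\Delta$.
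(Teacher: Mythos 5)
Your proposal is correct, and its core engine is exactly the paper's: identify $T_\gamma$ with the physical-sheet $T$ outside $\Omega_\gamma$ and with the unphysical-sheet continuation $T'$ inside $\Omega_\gamma$, via contour deformation and uniqueness for the Lippmann--Schwinger equation, then read the discrete spectrum of $H_\gamma$ off the poles of $T_\gamma$ through \eqref{RgT0}. The differences lie in the two refinements. For the third assertion the paper simply invokes the uniqueness principle for analytic continuation to conclude that $T_\gamma$ inside $\Omega_\gamma$ \emph{is} the continuation of $T(\lambda,\mu,\cdot)$ to the unphysical sheet, so its poles are the resonances; you additionally route through \eqref{TLS8} and Corollary \ref{corol} to pin the poles down as the points of $\sigma^{\ell}_{\rm res}$ where $S_\ell$ has eigenvalue zero --- more explicit, but the same content. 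The genuine divergence is the second assertion. The paper handles it by comparing two contours: $T_{\gamma_1}=T_{\gamma_2}$ off the region $\Omega_{12}$ between them, hence $\sigma_p(H_{\gamma_1})\setminus\Omega_{12}=\sigma_p(H_{\gamma_2})\setminus\Omega_{12}$, and since $\Omega_{12}$ meets $\bbR$ only at $a$ and $b$ when both contours lie in the same half-plane, the eigenvalues on $\Delta$ are $\gamma$-independent; the paper is essentially silent on why these eigenvalues coincide with $\sigma_p(H)\cap\Delta$. Your boundary analysis supplies precisely that missing identification: meromorphy of $T_\gamma$ in a full neighborhood of $E\in\Delta$ (legitimate, since $\Delta$ is separated from $\sigma_c(H_\gamma)=\overline{\gamma}$), matching to the $\bbC^+$ boundary values of $T$ where $T_\gamma=T$, finiteness of $T(E+\ri 0)$ for $E\notin\sigma_p(H)$ by Faddeev's theory, and the blow-up $(E-z)T(z)\to -VP_EV\neq 0$ at embedded eigenvalues, with the clean observation that $V\psi=0$ would force $(H_0-E)\psi=0$ and hence $\psi=0$ because $H_0$ has no point spectrum. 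So your route is not only valid but fills in a step the paper leaves implicit; its price is reliance on the nontangential limit $(E-z)R(z)\to P_E$ and on operator-norm control of the boundary values $T(E+\ri 0)$, both available in this model, whereas the paper's two-contour comparison stays entirely within the integral-equation formalism.
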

\begin{proof}
Already from \eqref{Tgz} one may conclude that, for any fixed
$z\not\in\sigma(H_\gamma)$, the kernel $T_\gamma(\lambda,\mu,z)$ is
holomorphic in the variables $\lambda,\mu\in\Omega$ (since $V$ is
holomorphic). Indeed, \eqref{Tgz} means
$$
{T_\gamma({\lambda,\mu},z)=V({\lambda,\mu})+\int_{\gamma}d\mu'\int_\gamma d\lambda'\,\,
V({\lambda},\mu')R_\gamma(\mu',\lambda',z)V(\lambda',{\mu})}.
$$
One may pull $\lambda$ and $\mu$ anywhere in $\Omega$. And this will be true after analytic continuation
of $R_\gamma(\mu',\lambda',z)$ in $z$ through $\gamma$.

Now look at the Lippmann-Schwinger equation for $T_\gamma$,
\begin{equation}
\label{TLSg}
T_\gamma(\lambda,\mu,z)=V(\lambda,\mu)-\int_\gamma d\nu\quad\frac{V(\lambda,\nu)T_\gamma(\nu,\mu,z)}{\nu-z},
\quad z\not\in\overline{\gamma},\quad \lambda,\mu\in\gamma.
\end{equation}
Assume that $z\in\bbC\setminus\Omega_\gamma$ and consider for such a $z$ also the Lippmann-Shwinger
equation for the ``original'' $T$-matrix --- it is associated with
the interval $\Delta$:
\begin{equation}
\label{TLS}
T(\lambda,\mu,z)=V(\lambda,\mu)-\int_a^b d\nu\quad\frac{V(\lambda,\nu)T(\nu,\mu,z)}{\nu-z},
\quad z\not\in\Omega_\gamma,\quad \lambda,\mu\in(a,b).
\end{equation}
Since both the kernels $V(\lambda,\cdot,z)$ and $T(\lambda,\cdot,z)$
for fixed $z\not\in \Omega_\gamma (\cup\sigma_p(H))$ are holomorphic
in $\lambda\in\Omega$, one may transform the interval $[a,b]$ into
the contour $\gamma$ and obtain:
\begin{equation}
\label{TLSgn}
T(\lambda,\mu,z)=V(\lambda,\mu)-\int_\gamma d\nu\quad\frac{V(\lambda,\nu)T(\nu,\mu,z)}{\nu-z},
\quad z\not\in\Omega_\gamma,\quad \lambda,\mu\in(a,b).
\end{equation}
Pull $\lambda,\mu$ on $\gamma$ and then compare \eqref{TLSg} and \eqref{TLSgn}. The uniqueness theorem for the solution to
\eqref{TLSgn} implies:
$$
\text{{{$T_\gamma(\lambda,\mu,z)=T(\lambda,\mu,z) \quad \text{whenever}
\quad \lambda,\mu\in\gamma,\,\, z\in\Omega\setminus\Omega_\gamma$}}}
\text{ (and } z\not\in \sigma_p(H)).
$$

\begin{figure}[htb]
{\includegraphics[angle=0,width=7.cm]{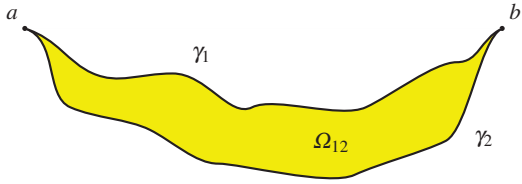}}
\caption{The closed set $\Omega_{12}$ bounded by the Jordan contours $\gamma_1$ and $\gamma_2$.}
\label{Fig4}
\end{figure}

\noindent Similarly, we have
\begin{align}
T_{\gamma_1}(\lambda,\mu,z)&=T_{\gamma_2}(\lambda,\mu,z)
\quad \text{whenever } \lambda,\mu\in\Omega  \text{ and }
z\not\in\Omega_{12}\cup\sigma_p(H_{\gamma_1}),
\nonumber
\end{align}
where $\Omega_{12}$
is the closed set bounded by the curves $\gamma_1$ and $\gamma_2$ (see Figure \ref{Fig4}).
This also means that
$$
\sigma_p(H_{\gamma_1})\setminus\Omega_{12}=\sigma_p(H_{\gamma_2})\setminus\Omega_{12}.
$$
Finally, by the uniqueness principle for analytic continuation, for
{$z$} inside {$\Omega_\gamma$} the kernel
{$T_\gamma(\lambda,\mu,z)$} represents just the analytic
continuation of {$T(\lambda,\mu,\cdot)$} to the interior of
{$\Omega_\gamma$} lying in the unphysical sheet. Hence, the poles of
{$T_\gamma(z)$} within {$\Omega_\gamma$} represent resonances of the
original Friedrichs-Faddeev Hamiltonian (the one that was introduced
for the interval $(a,b)$). This also means that the positions of
the resonances inside $\Omega_\gamma$ do not depend on {$\gamma$}.
The proof is complete.
\end{proof}

\section*{Conclusion}

For the (analytic) Friedrichs-Faddeev model, we have derived
representations that explicitly express the transition operator and
scattering matrix on unphysical energy sheets in terms of these same
operators considered exclusively on the physical sheet. A resonance
on a sheet {$\Pi_\ell$}, $\ell=\pm1$, or, more precisely, in the domain
$\Pi_\ell\cap\Omega_\ell$ is just a point, for the copy $z$ of which on
the physical sheet the corresponding scattering matrix
{$S_{\ell}(z)$} has eigenvalue zero, that is,
\begin{equation}
\label{Slz}
{S_\ell(z)\mathcal{A}=0}\quad\text{for a non-zero vector \,}{\mathcal{A}\in\fh}.
\end{equation}
Furthermore, we have shown that, for the Friedrichs-Faddeev model,
the deformation resonances are exactly the scattering matrix
resonances.

\end{document}